\documentclass[a4,12pt]{article}
\usepackage{amsmath,amssymb,enumerate,bbm,theorem}
\usepackage[english]{babel}

\usepackage[margin=1in]{geometry}

%%%%%%%%%% Start TeXmacs macros
\newcommand{\assign}{:=}
\newcommand{\nequiv}{\not\equiv}
\newcommand{\nobracket}{}
\newcommand{\nocomma}{}
\newcommand{\tmem}[1]{{\em #1\/}}
\newcommand{\tmop}[1]{\ensuremath{\operatorname{#1}}}

\newenvironment{enumerateroman}{\begin{enumerate}[i.] }{\end{enumerate}}
\newenvironment{proof}{\noindent\textbf{Proof\ }}{\hspace*{\fill}$\Box$\medskip}
\newtheorem{definition}{Definition}
{\theorembodyfont{\rmfamily}\newtheorem{example}{Example}}
\newtheorem{lemma}{Lemma}
\newtheorem{proposition}{Proposition}
{\theorembodyfont{\rmfamily}\newtheorem{remark}{Remark}}
%%%%%%%%%% End TeXmacs macros

\begin{document}

\title{On the equation $\nabla \phi + \phi X = 0$ and its relation to
Schr{\"o}dinger ground states}
\author{Kurt Pagani\\pagani@scios.ch\\To the memory of Joseph Hersch 1925-2012}

\maketitle

\abstract{We present some simple relations between the absolute
minimizers of the functional $\vert\vert\nabla\phi+\phi
X\vert\vert$, where $X$ is a vector field on ${\mathbb R}^n$, and ground state 
solutions to the (non-relativistic) Schr{\"o}dinger equation. 
This article is a byproduct of the study of the more general functional 
$\vert\vert A\,Du+F'(u) X \vert\vert$.}

\section{Ground states}

In the following let $X, Y, \ldots$denote vector fields on $\mathbbm{R}^n$,
and $\langle f, g \rangle$ and $\| f \|$ the $L^2 ( \mathbbm{R}^n)$ scalar
product and norm respectively. We will use the same notation for vector fields
as well, that is \ $\langle X, Y \rangle = \sum \langle X_i, Y_i \rangle$ and
$\| X \| = \sqrt{\langle X, X \rangle}$. All vector fields and functions shall
be real valued. For any $V \in L^1_{\tmop{loc}} ( \mathbbm{R}^n)$ we define
the Schr{\"o}dinger ground state energy as
\begin{equation}
  E_0 ( V) = \inf \left\{ \| \nabla \phi \|^2 + \int_{\mathbbm{R}^n} V ( x) |
  \phi ( x) |^2 d x : \, \phi \in C_0^{\infty} ( \mathbbm{R}^n),
  \, \| \phi \| = 1 \right\}, \label{schrgs}
\end{equation}
where $C_0^{\infty} ( \mathbbm{R}^n)$ means, as usual, the space of smooth
functions having compact support. Note that we use units such that the
(non-relativistic, time independent) Schr{\"o}dinger equation has the form
\begin{equation}
  - \Delta \psi + V \psi = E \psi \hspace{2em} \tmop{in} \hspace{1em}
  \mathcal{D}' ( \mathbbm{R}^n) . \label{schreq}
\end{equation}

The leading actor in this article is the non-coercive functional
\begin{equation}
  \mathcal{J}_X ( \phi) = \int_{\mathbbm{R}^n} | \nabla \phi + \phi X |^2 d x
  = \| \nabla \phi + \phi X \|^2,
\end{equation}
which is well defined on $C_0^{\infty} ( \mathbbm{R}^n)$ for any locally
square integrable vector field $X.$
\newpage
\begin{definition}
  A function $\phi_0$ is called a ground state to $\mathcal{J}_X ( \phi)$, if
  the following conditions are satisfied:
  \begin{enumerateroman}
    \item $\phi_0 \nequiv 0$,
    
    \item $\mathcal{J}_X ( \phi_0) = 0$,
    
    \item $\| \nabla \phi_0 \| < \infty .$
  \end{enumerateroman}
\end{definition}

Thus, when we set
\begin{equation}
  \Lambda ( X) = \inf \left\{ \mathcal{J}_X ( \phi) : \, \phi \in
  C_0^{\infty} ( \mathbbm{R}^n), \, \| \phi \| = 1 \right\},
\end{equation}
a necessary condition for a ground state to exist is $\Lambda ( X) = 0.$
Formally, the Euler equation to the minimum problem above is easily calculated
to be
\begin{equation}
  - \Delta u + ( | X |^2 - \tmop{div} X) u = \Lambda ( X) u, \label{eulereq}
\end{equation}
revealing the connection with the Schr{\"o}digner equation $\left( \ref{schreq}
\right) .$ On the other hand, if there is a ground state to $\mathcal{J}_X$,
say $\phi_0,$ it has to satisfy the first order equation (a.e.)
\begin{equation}
  \nabla \phi_0 ( x) + \phi_0 ( x) X ( x) = 0 \label{maineq}
\end{equation}
since $\mathcal{J}_X ( \phi_0) = \| \nabla \phi_0 + \phi_0 X \| = 0$. Indeed,
applying the divergence operator to $\left( \ref{maineq} \right)$ and using the latter
again in the resulting expression, gives $\left( \ref{eulereq} \right)$ with
$\Lambda ( X) = 0$, as expected. It is clear that the equation $\left(
\ref{maineq} \right)$ is generally easier to solve than $\left( \ref{eulereq}
\right)$.

Now let us state a simple proposition:

\begin{proposition}
  Let $X \in L^2_{\tmop{loc}}$ such that $\tmop{div} ( X) \in L_{\tmop{loc}}^1
  ( \mathbbm{R}^n)$, then for all $\phi \in C_0^{\infty} ( \mathbbm{R}^n)$ the
  following assertions are true:
  \begin{equation}
    \| \nabla \phi + X \phi \|^2 = \| \nabla \phi \|^2 + \| X \phi \|^2 -
    \int_{\mathbbm{R}^n} \tmop{div} ( X)  | \phi ( x) |^2 d x \label{peq1}
  \end{equation}
  and
  \begin{equation}
    \| \nabla \phi \|^2  \| X \phi \|^2 \geqslant \frac{1}{4} \left(
    \int_{\mathbbm{R}^n} \tmop{div} ( X)  | \phi ( x) |^2 d x \right)^2
    \label{peq2}
  \end{equation}
  with equality \ if and only if
  \begin{equation}
    \| \nabla \phi_0 + X \phi_0 \|^{} = 0 \label{peq3}
  \end{equation}
  for some functions $\phi_0$ such that $\| \nabla \phi_0 \| < \infty .$ For
  any such function $\phi_0$ then
  \begin{equation}
    \| \nabla \phi_0 \|^2 = \| X \phi_0 \|^2 = \frac{1}{2}
    \int_{\mathbbm{R}^n} \tmop{div} ( X)  | \phi_0 ( x) |^2 d x \label{peq4}
  \end{equation}
  holds.
\end{proposition}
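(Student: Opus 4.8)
The plan is to expand the square defining $\mathcal{J}_X$, rewrite its cross term as a divergence integral, and then extract $(\ref{peq2})$--$(\ref{peq4})$ from Cauchy--Schwarz; the one delicate point is that a ground state $\phi_0$ need not be compactly supported.

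For $\phi \in C_0^{\infty}(\mathbbm{R}^n)$ I first expand
\[
  \| \nabla \phi + X \phi \|^2 = \| \nabla \phi \|^2 + 2 \langle \nabla \phi, X \phi \rangle + \| X \phi \|^2 ,
\]
and note that $2 \langle \nabla \phi, X \phi \rangle = \int_{\mathbbm{R}^n} X \cdot \nabla(\phi^2) \, dx = - \int_{\mathbbm{R}^n} \tmop{div}(X) \, \phi^2 \, dx$, the last step being exactly the definition of the distributional divergence of $X \in L^2_{\tmop{loc}}$ applied to the test function $\phi^2 \in C_0^{\infty}$ (permissible since $\tmop{div}(X) \in L^1_{\tmop{loc}}$ and every integral converges because $\phi$ has compact support). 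This is $(\ref{peq1})$.

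From this identity $\int_{\mathbbm{R}^n} \tmop{div}(X)|\phi|^2 \, dx = -2\langle \nabla \phi, X \phi \rangle$, so the right-hand side of $(\ref{peq2})$ is $\langle \nabla \phi, X \phi \rangle^2$ and $(\ref{peq2})$ is precisely the Cauchy--Schwarz inequality. I would, however, record it via the real quadratic $q(s) \assign \| \nabla \phi + s X \phi \|^2 = \| \nabla \phi \|^2 + 2s \langle \nabla \phi, X \phi \rangle + s^2 \| X \phi \|^2 \geqslant 0$ for $s \in \mathbbm{R}$: its nonnegativity forces the discriminant to be $\leqslant 0$, which is $(\ref{peq2})$, and the discriminant vanishes exactly when $q$ has a real double root $s_0$, i.e.\ when $\nabla \phi + s_0 X \phi = 0$ a.e.\ for some scalar $s_0$.

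It remains to read off the equality clause and $(\ref{peq4})$. If $\phi_0$ satisfies $(\ref{peq3})$, then $\nabla \phi_0 = -X\phi_0$ a.e., so $\nabla\phi_0$ and $X\phi_0$ are parallel and $(\ref{peq2})$ holds with equality; moreover $\| \nabla \phi_0 \| = \| X \phi_0 \|$ at once, and substituting $(\ref{peq3})$ into $(\ref{peq1})$ written for $\phi_0$ gives $0 = 2 \| \nabla \phi_0 \|^2 - \int_{\mathbbm{R}^n} \tmop{div}(X)|\phi_0|^2 \, dx$, which is $(\ref{peq4})$. Conversely, equality in $(\ref{peq2})$ yields only $\nabla \phi_0 = -s_0 X\phi_0$ for some $s_0$, i.e.\ $q(s_0)=0$; since $q(1) = (1-s_0)^2 \| X\phi_0 \|^2$, recovering $(\ref{peq3})$ amounts to knowing in addition that $s_0 = 1$ (equivalently that $\| \nabla\phi_0 \| = \| X\phi_0 \|$ and $\int_{\mathbbm{R}^n} \tmop{div}(X)|\phi_0|^2 \, dx \geqslant 0$, i.e.\ that $\phi_0$ is a genuine ground state, $\mathcal{J}_X(\phi_0)=0$). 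I expect the main obstacle to be the appeal to $(\ref{peq1})$ for $\phi_0$, which so far is proved only on $C_0^{\infty}$: I would justify it by truncation, applying $(\ref{peq1})$ to $\eta_R \phi_0$ for standard cutoffs $\eta_R$ ($\eta_R \equiv 1$ on $B_R$, supported in $B_{2R}$, $|\nabla\eta_R| \leqslant C/R$) and letting $R \to \infty$; the error terms carrying $\nabla\eta_R$ are controlled by $\frac{C}{R}\| X\phi_0\|_{L^2(B_{2R}\setminus B_R)} \| \phi_0 \|_{L^2(B_{2R}\setminus B_R)}$ and disappear once $\phi_0, X\phi_0 \in L^2(\mathbbm{R}^n)$, which holds for a ground state given that $\phi_0 \in L^2$.
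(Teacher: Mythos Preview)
Your proof follows essentially the same route as the paper's: expand the square and integrate by parts (i.e.\ apply the distributional divergence) for $(\ref{peq1})$, then use the discriminant condition on the nonnegative quadratic $\lambda \mapsto \| \nabla \phi + \lambda X \phi \|^2$ (equivalently Cauchy--Schwarz) for $(\ref{peq2})$--$(\ref{peq3})$, and substitute back for $(\ref{peq4})$. Your extra care with the cutoff/truncation argument to justify $(\ref{peq1})$ for a not-necessarily compactly supported $\phi_0$ is a detail the paper simply passes over.
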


\begin{proof}
  When expanding $\| \nabla \phi + X \phi \|^2$ one gets $\| \nabla \phi \|^2
  + \| X \phi \|^2 + 2 \langle \nabla \phi, X \phi \rangle$, where the scalar
  product $2 \langle \nabla \phi, X \phi \rangle $may be rewritten to $\langle
  \nabla | \phi |^2, X \rangle$. Since $\tmop{div} ( X) \in L^1_{\tmop{loc}} (
  \mathbbm{R}^n)$ by definition, the identity $\left( \ref{peq1} \right)$
  follows by application of the divergence theorem. Inequality $\left(
  \ref{peq2} \right)$ and $\left( \ref{peq3} \right)$ are immediate
  consequences of the discriminant condition applied to $\| \nabla \phi +
  \lambda X \phi \|^2 = \| \nabla \phi \|^2 + \lambda^2 \| X \phi \|^2 - 2
  \lambda \langle \phi, \phi \tmop{div} ( X) \rangle \geqslant 0$ for all
  $\lambda \in \mathbbm{R}$. Finally, $\left( \ref{peq4} \right)$ follows from
  $\left( \ref{peq3} \right)$ and $\left( \ref{peq1}, \ref{peq2} \right)$.
\end{proof}

When we define
\[ V_{X, \lambda} = | X |^2 - \tmop{div} ( X) + \lambda, \]
for $X \in L^{2_{}}_{\tmop{loc}}$ , $\tmop{div} ( X) \in L^1_{\tmop{loc}}$ ,
$\lambda \in \mathbbm{R} \nocomma,$ then $\left( \ref{peq1} \right)\, $reads
\[ \| \nabla \phi + X \phi \|^2 = \int_{\mathbbm{R}^n} | \nabla \phi |^2 d x +
   \int_{\mathbbm{R}^n} ( V_{X, \lambda} ( x) - \lambda) | \phi ( x) |^2 d x,
\]
and recalling $\left( \ref{schrgs} \right)$, we get
\[ \inf_{\| \phi \| = 1} \| \nabla \phi + X \phi \|^2 = E_0 ( V_{X, \lambda})
   - \lambda \geqslant 0, \]
which motivates the next proposition:

\begin{proposition}
  For $V \in L_{\tmop{loc}}^1 ( \mathbbm{R}^n)$ and $E_0 ( V)$ as defined in
  $\left( \ref{schrgs} \right)$ it holds
  \begin{equation}
    E_0 ( V) \geqslant \sup_X  \left\{ \Lambda ( X) + \inf_{\| \phi \| = 1}
    \int_{\mathbbm{R}^n} ( V ( x) - | X |^2 + \tmop{div} X)  | \phi ( x) |^2 d
    x \right\} . \label{jhineq}
  \end{equation}
  and
  \begin{equation}
    E_0 ( V) \leqslant \inf_X  \left\{ \Lambda ( X) + \sup_{\| \phi \| = 1}
    \int_{\mathbbm{R}^n} ( V ( x) - | X |^2 + \tmop{div} X)  | \phi ( x) |^2 d
    x \right\} . \label{jhineq2}
  \end{equation}
  \[  \]
\end{proposition}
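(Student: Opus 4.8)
The plan is to obtain both inequalities from a single rearrangement of identity $(\ref{peq1})$; everything else is elementary manipulation of infima and suprema. Throughout, $X$ ranges over vector fields with $X \in L^2_{\tmop{loc}} ( \mathbbm{R}^n)$ and $\tmop{div} ( X) \in L^1_{\tmop{loc}} ( \mathbbm{R}^n)$, so that Proposition~1 applies; for any such $X$ the number $\Lambda ( X)$ is finite and nonnegative (nonnegativity is clear, finiteness follows by testing with a single fixed $\phi \in C_0^{\infty}$, $\| \phi \| = 1$). Set $W_X \assign V - | X |^2 + \tmop{div} ( X) \in L^1_{\tmop{loc}} ( \mathbbm{R}^n)$. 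For every $\phi \in C_0^{\infty} ( \mathbbm{R}^n)$ with $\| \phi \| = 1$ all integrals below converge, because $\phi$ is bounded with compact support, and $(\ref{peq1})$ rewrites as
\[ \| \nabla \phi \|^2 + \int_{\mathbbm{R}^n} V | \phi |^2 d x = \| \nabla \phi + X \phi \|^2 + \int_{\mathbbm{R}^n} W_X | \phi |^2 d x . \]
Since $E_0 ( V)$ is the infimum of the left-hand side over normalized $\phi$, this identity does essentially all the work.

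For $(\ref{jhineq})$, fix an admissible $X$. For each normalized $\phi$ estimate the right-hand side from below by $\Lambda ( X) + \inf_{\| \psi \| = 1} \int_{\mathbbm{R}^n} W_X | \psi |^2 d x$, using $\| \nabla \phi + X \phi \|^2 \geqslant \Lambda ( X)$ for the first term and bounding the second term by its infimum over normalized functions. Taking the infimum over $\phi$ gives $E_0 ( V) \geqslant \Lambda ( X) + \inf_{\| \psi \| = 1} \int_{\mathbbm{R}^n} W_X | \psi |^2 d x$, and then the supremum over admissible $X$ yields $(\ref{jhineq})$.

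For $(\ref{jhineq2})$, again fix an admissible $X$. For each normalized $\phi$,
\[ E_0 ( V) \leqslant \| \nabla \phi \|^2 + \int_{\mathbbm{R}^n} V | \phi |^2 d x = \| \nabla \phi + X \phi \|^2 + \int_{\mathbbm{R}^n} W_X | \phi |^2 d x \leqslant \| \nabla \phi + X \phi \|^2 + \sup_{\| \psi \| = 1} \int_{\mathbbm{R}^n} W_X | \psi |^2 d x . \]
The last supremum does not depend on $\phi$, so taking the infimum over $\phi$ and using $\inf_{\| \phi \| = 1} \| \nabla \phi + X \phi \|^2 = \Lambda ( X)$ gives $E_0 ( V) \leqslant \Lambda ( X) + \sup_{\| \psi \| = 1} \int_{\mathbbm{R}^n} W_X | \psi |^2 d x$; the infimum over admissible $X$ yields $(\ref{jhineq2})$.

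There is no genuine analytic obstacle here; the one point needing a moment's care is the order in which $\inf$ and $\sup$ are taken, in particular the decoupling of the two variational terms on the right-hand side of the identity. In the lower bound one may estimate each summand separately before applying $\inf_\phi$; in the upper bound one first replaces the $\phi$-dependent integral by its $\phi$-independent supremum, which is the elementary fact $\inf_\phi ( a_\phi + b_\phi) \leqslant \inf_\phi a_\phi + \sup_\phi b_\phi$. It is also worth noting that the inner $\inf_\psi$ and $\sup_\psi$ range over the same class $\{ \psi \in C_0^{\infty} ( \mathbbm{R}^n) : \| \psi \| = 1 \}$ as the outer variational problems, so no comparison between different test-function classes is involved, and that the displayed identity and all the inequalities remain meaningful in $[ - \infty, + \infty]$, so the degenerate cases $E_0 ( V) = \pm \infty$ need no separate treatment.
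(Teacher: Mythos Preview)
Your proof is correct and follows essentially the same route as the paper: both start from the rearranged identity $\| \nabla \phi \|^2 + \int V | \phi |^2 = \| \nabla \phi + X \phi \|^2 + \int W_X | \phi |^2$ obtained from $(\ref{peq1})$, and then use the elementary inequalities $\inf_\phi (a_\phi + b_\phi) \geqslant \inf_\phi a_\phi + \inf_\phi b_\phi$ for $(\ref{jhineq})$ and $\inf_\phi (a_\phi + b_\phi) \leqslant \inf_\phi a_\phi + \sup_\phi b_\phi$ for $(\ref{jhineq2})$. Your write-up is somewhat more explicit than the paper's about which of these two inequalities is being invoked and about the admissibility and integrability hypotheses on $X$, but the argument is the same.
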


If $X$ and $V$ are pointwise defined (e.g. continuous) then there is
a simple lower bound to the Schr{\"o}dinger ground state energy:
\[ E_0 ( V) \geqslant \inf_{\mathbbm{R}^n} \{ V ( x) - | X |^2 + \tmop{div} X
   \}, \]
valid for any reasonable vector field. The lower bound above was
was orally communicated to us by the late Jospeh Hersch many years
ago and, actually, we found a reference to it in one of his numerous papers 
\cite{jh1} . 
Note that $\left(
\ref{jhineq} \right)$ may be simplified by taking the supremum over ground 
states only, because then $\Lambda ( X) = 0.$ \\

\begin{proof}
  Setting $B_{V, X} ( \phi) = \int_{\mathbbm{R}^n} ( V ( x) - | X |^2 +
  \tmop{div} X)  | \phi ( x) |^2 d x$, the upper bound
  \[ E_0 ( V) \leqslant \| \nabla \phi + \phi X \|^2 + B_{V, X} ( \phi) \]
  follows easily from $\left( \ref{schrgs} \right)$ and
  $\left(\ref{peq1}\right)$. Adding $\int V | \phi |^2 d x$ to
  $\left(\ref{peq1}\right)$ and rearranging, yields
  \[ \| \nabla \phi + \phi X \|^2 + B_{V, X} ( \phi) = \| \nabla \phi \|^2 +
     \int_{\mathbbm{R}^n} V ( x) | \phi ( x) |^2 d x. \]
  Taking the infimum over \{$\| \phi \| = 1$\} on both sides gives
  \[ \Lambda ( X) + \inf_{\| \phi \| = 1} B_{V, X} ( \phi) \leqslant E_0 ( V)
     . \]
  Since the right hand side is independent of $X$, taking the supremum over
  $X$ proves $\left( \ref{jhineq} \right) .$ Now $\left( \ref{jhineq2}
  \right)$ follows from the upper bound: $\inf_{\| \phi \| = 1} \{ E_0 ( V) -
  B_{V, X} ( \phi) \} \leqslant \Lambda ( X)$. Indeed, $E_0 ( V) + \inf_{\|
  \phi \| = 1} ( - B_{V, X} ( \phi)) = E_0 ( V) - \sup_{\| \phi \| = 1} B_{V,
  X} ( \phi) \leqslant \Lambda ( X) .$ 
\end{proof}

\begin{remark}
  It is intuitively obvious that not every $X$ gives rise to a ground state, or
  in other terms, the equation $\left( \ref{maineq} \right)$ may have no
  (nontrivial) solutions at all. Indeed, as will be seen shortly, $X$ must be a gradient
  (one would say {\tmem{exact}} in terms of differential forms). It is well
  known that Schr{\"o}dinger ground states may be chosen positive, so that $X
  = - \nabla \log \psi_0$ is an admissible vector field satisfying
  $\mathcal{J}_X ( \psi_0) = 0$.
  
  Noting that the linear functional
  \[ T_X ( \phi) = \int_{\mathbbm{R}^n} \langle \nabla \phi + \phi X, X
     \rangle d x \]
  satisfies
  \[ | T_X ( \phi) | \leqslant \| X \|_{2, K}  \| \nabla \phi \|_{2, K} + \|
     X \|_{2, K}^2 \| \phi \|_{\infty} \leqslant C_K \sup_{| \alpha |
     \leqslant 1}  \| \partial^{\alpha} \phi \|_{\infty}, \]
  for any $X \in L_{\tmop{loc}}^2$, thus $T \in \mathcal{D}' (
  \mathbbm{R}^n)$, so that Propostion 4 may be extended to more general
  potentials (e.g. measures) along the same lines.
\end{remark}

The ground states to $\mathcal{J}_X$ have some nice properties.

\begin{proposition}
  Let $\phi_0, \phi_1$ ground states to $\mathcal{J}_X, \mathcal{J}_Y$
  respectively, then $\phi_0 \cdot \phi_1 $is a ground state to
  $\mathcal{J}_{X + Y}$.
\end{proposition}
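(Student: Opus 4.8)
The plan is to reduce the statement to the first-order equation $\left(\ref{maineq}\right)$. Since $\phi_0$ is a ground state of $\mathcal{J}_X$ we have $\nabla\phi_0+\phi_0X=0$ a.e., and likewise $\nabla\phi_1+\phi_1Y=0$ a.e.; moreover each $\phi_i\in W^{1,2}_{\tmop{loc}}(\mathbbm{R}^n)$ with $\nabla\phi_i\in L^2(\mathbbm{R}^n)$, hence also $\phi_iX_i\in L^2(\mathbbm{R}^n)$. Set $\psi\assign\phi_0\cdot\phi_1$. By Sobolev embedding and H\"older each product $\phi_i\nabla\phi_j$ lies in $L^r_{\tmop{loc}}$ for some $r>1$, so the Leibniz rule is legitimate and gives $\nabla\psi=\phi_1\nabla\phi_0+\phi_0\nabla\phi_1$; substituting the two first-order equations yields
\[ \nabla\psi+\psi\,(X+Y)=\phi_1\bigl(\nabla\phi_0+\phi_0X\bigr)+\phi_0\bigl(\nabla\phi_1+\phi_1Y\bigr)=0 \]
a.e.\ on $\mathbbm{R}^n$. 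Consequently the integrand of $\mathcal{J}_{X+Y}(\psi)$ vanishes a.e., so condition (ii) of the definition of a ground state, $\mathcal{J}_{X+Y}(\psi)=0$, holds automatically.

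It then remains to verify conditions (i) and (iii) for $\psi$. For (iii): from $\nabla\psi=\phi_1\nabla\phi_0+\phi_0\nabla\phi_1$ and $\nabla\phi_0,\nabla\phi_1\in L^2(\mathbbm{R}^n)$ one gets $\|\nabla\psi\|<\infty$ as soon as a bound of the type $\phi_0,\phi_1\in L^\infty$ is available, which I would extract from the first-order equation together with the hypotheses on $X,Y$ (for ground states arising from Schr\"odinger operators this is classical). For (i) the issue is that, a priori, $\phi_0$ and $\phi_1$ might be supported on essentially disjoint sets, so that $\psi\equiv0$; to rule this out I would show that a ground state is non-vanishing almost everywhere. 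Restricting $\nabla\phi_0=-\phi_0X$ to a line $\ell$ parallel to a coordinate axis $e$, for a.e.\ such $\ell$ (Fubini, plus Cauchy--Schwarz to pass from $L^2$ to $L^1_{\tmop{loc}}$ on the line) the trace $\phi_0|_\ell$ is absolutely continuous and solves the scalar linear ODE $(\phi_0|_\ell)'=-(X\cdot e)|_\ell\,\phi_0|_\ell$ with $L^1_{\tmop{loc}}$ coefficient; by uniqueness $\phi_0|_\ell$ either vanishes identically on $\ell$ or is nowhere zero on $\ell$. Running this over all coordinate directions forces $\{\phi_0=0\}$ to be a null set or a conull set, and the conull case is excluded by $\phi_0\nequiv0$. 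The same applies to $\phi_1$, whence $\{\psi=0\}=\{\phi_0=0\}\cup\{\phi_1=0\}$ is null and $\psi\nequiv0$.

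The algebraic heart of the argument, the Leibniz identity, is immediate, and I expect the real obstacle to be precisely the non-vanishing of ground states: without it the product degenerates and the proposition fails. A convenient way to package both loose ends is to first record that any ground state may be taken strictly positive (replace $\phi_0$ by $|\phi_0|$, which is again a ground state since $\nabla|\phi_0|=\tmop{sgn}(\phi_0)\nabla\phi_0=-|\phi_0|X$ a.e.\ and $\|\nabla|\phi_0|\|=\|\nabla\phi_0\|<\infty$), hence of the form $\phi_0=c_0e^{-u_0}$ with $\nabla u_0=X$, as in the Remark; then $\psi=c_0c_1e^{-(u_0+u_1)}$ with $\nabla(u_0+u_1)=X+Y$ exhibits $\psi$ at once as a candidate ground state of $\mathcal{J}_{X+Y}$, and only the finiteness of $\|\nabla\psi\|$ remains to be checked.
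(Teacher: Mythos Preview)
Your core argument---applying the Leibniz rule and substituting the two first-order equations to get $\nabla(\phi_0\phi_1)+(X+Y)\phi_0\phi_1=\phi_1(\nabla\phi_0+\phi_0X)+\phi_0(\nabla\phi_1+\phi_1Y)=0$---is exactly the paper's proof. The paper's argument consists solely of that one-line computation and does not address conditions (i) or (iii) at all; your discussion of non-vanishing via the line-ODE argument and of the $L^\infty$ bound needed for $\|\nabla\psi\|<\infty$ goes well beyond what the paper supplies, and your hesitations there are warranted (indeed, without extra hypotheses on $X,Y$ the $L^\infty$ bound and hence (iii) are not automatic). So the approaches coincide on the essential step, and you are simply being more scrupulous about the remaining clauses of Definition~1 than the paper itself.
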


\begin{proof}
  $\| \nabla ( \phi_0 \phi_1) + ( X + Y) ( \phi_0 \phi_1) \| = \| \phi_0 (
  \nabla \phi_1 + Y \phi_1) + \phi_1 ( ( \nabla \phi_0 + X \phi_0)) \| = 0$
  because both terms $\nabla \phi_1 + Y \phi_1 = \nabla \phi_0 + X \phi_0 = 0$
  in $L^2 ( \mathbbm{R}^n)$ by supposition.
\end{proof}

\begin{proposition}
  Suppose $\phi_0$ is a ground state to $\mathcal{J}_X$, and let $P$ be a
  harmonic, homogeneous polynomial, satisfying \ $2 \nabla P ( x) \cdot X ( x)
  + W ( x) P ( x) = 0$, then
  \[ \phi_P ( x) = P ( x) \phi_0 ( x) \]
  is a solution of $- \Delta \phi_P + ( W + | X |^2 - \tmop{div} X) \phi_P =
  0$, in $\mathcal{D}' ( \mathbbm{R}^n)$.
\end{proposition}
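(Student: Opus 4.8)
The plan is to verify the asserted identity by direct substitution of the first--order equation $\left( \ref{maineq} \right)$ satisfied a.e.\ by $\phi_0$, namely $\nabla \phi_0 = - \phi_0 X$, into $\Delta \phi_P$, exploiting only the harmonicity of $P$ (homogeneity plays no role in this computation). First I would compute, by the Leibniz rule, $\nabla \phi_P = \phi_0 \nabla P + P \nabla \phi_0 = \phi_0 ( \nabla P - P X )$; this is legitimate since $P$ is smooth and $\phi_0 \in W^{1,1}_{\tmop{loc}} ( \mathbbm{R}^n )$.

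Next I would expand $\Delta \phi_P = \tmop{div} ( \nabla \phi_P ) = \tmop{div} ( \phi_0 \nabla P ) - \tmop{div} ( P \phi_0 X )$ in $\mathcal{D}' ( \mathbbm{R}^n )$. For the first piece, the product rule together with $\Delta P = 0$ and $\left( \ref{maineq} \right)$ gives $\tmop{div} ( \phi_0 \nabla P ) = \nabla \phi_0 \cdot \nabla P + \phi_0 \Delta P = - \phi_0 \, X \cdot \nabla P$. For the second piece, using $\nabla ( P \phi_0 ) = \phi_0 ( \nabla P - P X )$ and $\tmop{div} X \in L^1_{\tmop{loc}}$,
\[ \tmop{div} ( P \phi_0 X ) = \nabla ( P \phi_0 ) \cdot X + P \phi_0 \tmop{div} X = \phi_0 \, \nabla P \cdot X - | X |^2 \phi_P + ( \tmop{div} X ) \phi_P . \]
Subtracting, the two occurrences of $\phi_0 \, \nabla P \cdot X$ add up and one is left with $\Delta \phi_P = - 2 \phi_0 \, X \cdot \nabla P + | X |^2 \phi_P - ( \tmop{div} X ) \phi_P$.

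Finally, the hypothesis $2 \nabla P \cdot X + W P = 0$ turns the first term into $W P \phi_0 = W \phi_P$, whence $\Delta \phi_P = ( W + | X |^2 - \tmop{div} X ) \phi_P$, which is the claim. The algebra is routine; the step that needs attention is that each product above --- $\nabla \phi_0 \cdot \nabla P$, $\nabla \phi_0 \cdot X$, $| X |^2 \phi_P$, $( \tmop{div} X ) \phi_P$ and $W \phi_P$ --- is locally integrable and that the Leibniz rules are valid distributionally. Since $\nabla \phi_0 , X \in L^2_{\tmop{loc}}$, all products of $\nabla \phi_0$ or $X$ against powers of $P$ are controlled by Cauchy--Schwarz; the only term asking for a mild extra hypothesis is $( \tmop{div} X ) \phi_P$, which is fine as soon as $\phi_0 \in L^\infty_{\tmop{loc}}$ --- a property available in the case of interest where $\phi_0 = \psi_0$ arises from a positive Schr{\"o}dinger ground state, cf.\ the Remark above.

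It is worth noting that this proposition is the ``norm--free'' counterpart of Proposition 3 applied with $Y = - \nabla \log P$: the function $P$ is a formal ground state of $\mathcal{J}_Y$ (it solves $\nabla P + P Y = 0$), the potential attached to $X + Y$ equals $| X - \nabla \log P |^2 - \tmop{div} ( X - \nabla \log P ) = W + | X |^2 - \tmop{div} X$ exactly because $P$ is harmonic and $2 X \cdot \nabla \log P = - W$, and the product of the (formal) ground states is $P \phi_0 = \phi_P$. Only the fact that $P \notin L^2 ( \mathbbm{R}^n )$, so that $\| \nabla P \| = \infty$ and $P$ fails condition (iii), blocks a direct appeal to Proposition 3 --- which is precisely why the elementary verification sketched above is the way to go.
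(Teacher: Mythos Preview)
Your argument is correct and follows essentially the same route as the paper: both proofs expand $\Delta(P\phi_0)$ via the Leibniz rule, substitute $\nabla\phi_0=-\phi_0 X$ twice, use $\Delta P=0$, and finish with the hypothesis $2\nabla P\cdot X+WP=0$. The only cosmetic difference is that the paper carries a test function $\varphi\in C_0^\infty$ through the computation (so the distributional meaning is built in), whereas you compute $\Delta\phi_P$ directly and append the integrability discussion afterwards.
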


\begin{proof}
  Let $\varphi \in C_0^{\infty} ( \mathbbm{R}^n) \nocomma$, then
  \[ \langle \Delta \varphi, P \phi_0 \rangle = - \langle \nabla \varphi,
     \phi_0 \nabla P + P \nabla \phi_0 \rangle  \]
  Now, since
  \[ \nabla \phi_0 + \phi_0 X = 0 \]
  by supposition, it follows $\langle \Delta \varphi, \nobracket P \phi_0)
  \rangle = - \langle \nabla \varphi, \phi_0 \nabla P - P \phi_0 X \rangle$,
  furthermore (using \ $\Delta P = 0$), $\langle \Delta \varphi, \nobracket P
  \phi_0) \rangle = - \langle \varphi, \nabla \phi_0 \nabla P - \phi_0 \nabla
  P X - P \nabla \phi_0 X - \phi_0 P \tmop{div} X \rangle,$using $\nabla
  \phi_0 + \phi_0 X = 0$ again (two times)
  \[ \langle \Delta \varphi, P \phi_0 \rangle = \langle \varphi, - \phi_0 X
     \nabla P - \phi_0 \nabla P X + P \phi_0  | X |^2 - \phi_0 P \tmop{div} X
     \rangle . \]
  Then, writing $\phi_P = P \phi_0$ and using the supposition $2 \nabla P X =
  W P$
  \[ \int_{\mathbbm{R}^n} \Delta \varphi \phi_P d x = \int_{\mathbbm{R}^n}
     \varphi ( W \phi_P + \phi_P  | X |^2 - \phi_P \tmop{div} X) d x. \]
\end{proof}

Now, let us look at some examples.

\begin{example}
  Let $X ( x) = \alpha \frac{x}{| x |^p}$, $\alpha, p \in \mathbbm{R} \nocomma
  \nocomma$. Noticing that $X$ is a gradient,
  \[ X = \nabla u = \frac{\alpha}{2 - p} \nabla | x |^{2 - p} \]
  the equation
  \[ \nabla \phi(x) + \phi(x)\, \nabla u(x) = 0 \]
  is easily solved:
  
		\[ \nabla(\log \phi + u)=0 \Rightarrow \phi(x) = C e^{-u(x)}, 
	\]
	
  thus we get
  \begin{equation}
    \phi_p ( x) = C \exp \left[ \frac{\alpha}{p - 2} | x |^{2 - p} \right]
    \label{eigenfun}
  \end{equation}
  where $C = C ( n, p, \alpha)$ is a normalization constant. A straightforward
  computation gives
  \[ | X |^2 = \alpha^2 | x |^{2 - 2 p} \in L^1_{\tmop{loc}} ( \mathbbm{R}^n)
     \ldots \tmop{if} \ldots p < 1 + \frac{n}{2}, \]
  \[ {\tmop{div} X = \alpha \frac{n}{| x |^p} - \alpha p \frac{1}{| x
     |^p} = \alpha \frac{n - p}{| x |^p} \in L^1_{\tmop{loc}} ( \mathbbm{R}^n)
     \ldots \tmop{if} \ldots p < 1 + n}, \]
  hence
  \[ V_{X, \lambda} ( x) - \lambda = \frac{\alpha^2}{| x |^{2 ( p - 1)}} -
     \alpha \frac{n - p}{| x |^p} \in L^1_{\tmop{loc}} ( \mathbbm{R}^n) \ldots
     \tmop{if} \ldots p < 1 + \frac{n}{2} . \]
  Inserting into equations $\left( \ref{peq1} \right), \left( \ref{peq2}
  \right)$ gives
  \begin{equation}
    \| \nabla \phi + X \phi \|^2 = \int_{\mathbbm{R}^n} | \nabla \phi_{} |^2 d
    x + \int_{\mathbbm{R}^n} \left( \frac{\alpha^2}{| x |^{2 ( p - 1)}} -
    \alpha \frac{n - p}{| x |^p} \right) | \phi ( x) |^2 d x \geqslant 0
    \label{exeq1}
  \end{equation}
  and
  \begin{equation}
    \int_{\mathbbm{R}^n} | \nabla \phi_{} |^2 d x \int_{\mathbbm{R}^n} | x
    |^{2 - 2 p}  | \phi ( x) |^2 d x \geqslant \frac{( n - p)^2}{4} \left(
    \int_{\mathbbm{R}^n} \frac{| \phi ( x) |^2}{| x |^p} d x \right)^2,
    \label{exeq2}
  \end{equation}
  where the equality sign holds if $\phi = \phi_p$ (provided that, of course,
  $\phi_p$ satisfies the conditions of a ground state. For instance 
	$n \geq 2,\, \alpha >
  0 \nocomma,\, p < 1 + n / 2$.
\end{example}

Let us have a closer look to the cases $p = 0, 1, 2, \ldots
\nocomma,$ revealing some old friends:

\begin{example}
  Uncertainty, harmonic oscillator $X = \alpha x$. Setting $p = 0$ in $\left(
  \ref{exeq1} \right)$ and $\left( \ref{exeq2} \right)$ yields
  \[ \| \nabla \phi + X \phi \|^2 = \int_{\mathbbm{R}^n} | \nabla \phi_{} |^2
     d x + \alpha^2 \int_{\mathbbm{R}^n} | x |^2 | \phi ( x) |^2 d x - 
		\alpha n \int
     | \phi ( x) |^2 d x, \]
  which gives the Schr{\"o}dinger ground state eigenvalue $E_0 = \alpha n$
  with eigenfunction $(\ref{eigenfun})$
  \[ \phi_0 ( x) = C e^{- \frac{\alpha}{2} | x |^2}, \]
  and the well known Heisenberg uncertainty relation:

  \[ \int_{\mathbbm{R}^n} | \nabla \phi ( x) |^2 d x \int_{\mathbbm{R}^n} | x
     |^2  | \phi ( x) |^2 d x \geqslant \frac{n^2}{4} \left(
     \int_{\mathbbm{R}^n} | \phi ( x) |^2 d x \right)^2 \label{eq11} \]
  Therefore, $\phi_0 $ is a nice function, it is in 
  $\mathcal{S} ( \mathbbm{R}^n)$
  and real analytic (like $X$). Equation $\left( \ref{eulereq} \right)$ goes
  to
  \begin{equation}
    - \Delta \phi_0 ( x) + \alpha^2 | x |^2 \phi_0 ( x) = n \alpha
    \phi_0 ( x)
  \end{equation}
  which is Schr\"odinger's equation for the harmonic oscillator.
  
  Proposition 6 shows moreover that $E_k = \alpha ( n + 2 k),$ $k = 0, 1, 2,
  \ldots$ are the higher eigenvalues with eigenfunctions
  \begin{equation}
    P_k ( x) \exp \left( - \frac{\alpha}{2} | x |^2 \right),
    \label{efosc}
  \end{equation}
  where $P_k$ is a harmonic, homogeneous polynomial of degree $k$.
\end{example}

\begin{example}
  The one electron atom: $X = \alpha \frac{x}{| x |}$. Setting $p = 1$, we get
  analogously
  \[ \| \nabla \phi + X \phi \|^2 = \| \nabla \phi \|_{}^2 +
     \int_{\mathbbm{R}^n} \left( \alpha^2 - \frac{\alpha \left( n
     - 1 \right)}{| x |} \right) | \phi ( x) |^2 d x, \]
  and
  \[ \int_{\mathbbm{R}^n} | \nabla \phi ( x) |^2 d x \int_{\mathbbm{R}^n} |
     \phi ( x) |^2 d x \geqslant \frac{\left( n^{} - 1 \right)^2}{4}
     \left( \int_{\mathbbm{R}^n} \frac{| \phi ( x) |^2}{| x |} d x \right)^2 .
  \]
  Now, $\left( \ref{eulereq} \right)$ goes to
  \[ - \Delta \phi_0 - \frac{\alpha \left( n - 1
     \right)}{| x |} \phi_0 = - \alpha^2 \phi_0 \]
  which is, when setting $n = 3 \nocomma, \, 2 \alpha = Z \nocomma$,
  the Schr{\"o}dinger equation of an electron in the field of a nucleus of 
  charge $Z$.
  The eigenfunction $(\ref{eigenfun})$ is
  \[ \phi_0 = C \exp \left( - \frac{Z}{2} | x | \right) \]
  and the corresponnding eigenvalue
  \[ E_0 = - \frac{Z^2}{4} \]
  so that 
  \[ \int_{\mathbbm{R}^3} | \nabla \phi ( x) |^2 d x - Z
     \int_{\mathbbm{R}^3}  \frac{| \phi ( x) |^2}{| x |} \geqslant -
     \frac{Z^2}{4} \int | \phi ( x) |^2 d x\]
	with equality for the $\phi_0$ above.
	
  This $\phi_0$ is still a nice function, but it is not in $\mathcal{S} (
  \mathbbm{R}^n)$ and fails to be continuously differentiable at the origin,
  i.e. $\phi_0 \in C^{\omega} ( \mathbbm{R}^n \backslash \{0\} \nobracket)$ 
  only.
  Again, Proposition 6 provides higher eigenvalues and eigenfunctions:
  \[  E_k^{} = - \left( \frac{2}{n - 1 + 2 k} \right)^2
        \alpha^2, \, k = 0, 1, 2, \ldots \] 
  \begin{equation}
    \phi_k^{} ( x) = P_k ( x) \exp \left( - \frac{\alpha | x |}{\left(
    n - 1 + 2 k \right)}^{} \right) \label{efatom}
  \end{equation}
  for any harmonic, homogeneous polynomial $P_k$ of degree $k$.
\end{example}

\begin{example}
  Hardy's inequality: $X = \alpha \frac{x}{| x |^2}$: this is the case $p =
  2$, so that $| X |^2 $ and $\tmop{div} X$ have the same exponent. The
  singularity at 0 causes again no problems (assuming $n \geqslant 3$) and we
  get
  \[ \| \nabla \phi + X \phi \|^2 = \| \nabla \phi \|_{}^2 + \alpha (
     \alpha^{} - n + 2) \int \frac{| \phi ( x) |^2}{| x |^2} d x \]
  and
  \[ \int_{\mathbbm{R}^n} | \nabla \phi ( x) |^2 d x \geqslant \frac{\left(
     n - 2 \right)^2}{4} \int_{\mathbbm{R}^n} \frac{| \phi ( x)
     |^2}{| x |^2} d x. \]
  But this time we cannot use $\left( \ref{eigenfun} \right)$. Instead, one has
  to solve
  \[ \nabla \phi + \alpha \frac{x}{| x |^2} \phi = 0 \Rightarrow \nabla ( \log
     \phi + \alpha \log | x |) = 0, \]
  yielding
  \[ \log ( \phi | x |^{\alpha}) = C \Rightarrow \phi_0 ( x) \sim | x
     |^{- \alpha} . \]
  Indeed, there is no reasonable minimzer, though formally $\mathcal{J}_X (
  \phi_0) = 0.$ Clearly, neither $\| \nabla \phi_0 \|_{}$ nor $\| \phi_0 X
  \|_{}$ is finite.
\end{example}

\begin{remark}
  Scaling behavior. Let $\phi_{\lambda} = \lambda^{n / 2} \phi ( \lambda x)$,
  $X_{\lambda} ( x) = \lambda^{- 1} X \left( \frac{x}{\lambda} \right)$
  then

  \[ \mathcal{J}_X ( \phi_{\lambda}) = \int_{\mathbbm{R}^n} \left| \lambda^{}
     \nabla \phi ( y) + \phi ( y) X \left( \frac{y}{\lambda} \right) \right|^2
     d y \]

  thus
  \[ \mathcal{J}_X ( \phi_{\lambda}) = \lambda^2 \mathcal{J}_{X_{\lambda}} (
     \phi) . \]
  The example $X_{\lambda} = \lambda^{p - 2} x | x |^{- p}$
  shows clearly that for $p \geqslant 2$ things are going odd. The examples
  also show that the singularities of $X$ decrease the regularity of the
  ground state. It is possible, of course, to extend Proposition (2) to
  punctured domains, so that the results may be extended beyond $p = 2$. 
\end{remark}

The reader may be puzzled by the form of the eigenfunctions $\left(
\ref{efosc} \right)$ and $\left( \ref{efatom} \right)$, but this is easily
resolved when recollecting the fact that a homogeneous polynomial may look
quite differently if restricted to a sphere (the factor $x_1^2 + x_2^2 + x_3^2
\ldots = 1$ drops out). There is, by the way, an interesting connection to
the fact that the Fourier transform leaves functions of the form
\[ f ( x) = P_k ( x) \psi ( | x |) \]
invariant, in the sense that
\[ \hat{f} ( \xi) = P_k ( \xi) \widehat{\psi} ( | \xi |) . \]
This is obvious in the case $p = 0$, $\alpha = 1$, as we then have the
eigenfunctions of the Fourier transform, but not so trivial in the case $p =
1$.

\section{Regularity}

We will give here only some elementary facts and assume some smoothness of
the solutions. It is well known that if $u$ and $F$ are continuous and
\[ \nabla u ( x) = F ( x)  \hspace{2em} \tmop{in} B_r ( x_0) \]
in the distributional sense, then it also holds in the classical sense
\cite{ho1}.

\begin{proposition}
  Let $\Omega \subset \mathbbm{R}^n$ be an open set and $\phi \in C^1 (
  \Omega)$ a solution of
  \[ \nabla \phi ( x) + \phi ( x) X ( x) = 0 \hspace{2em} \forall x \in \Omega
  \]
  where $X$ is a vector field on $\Omega$. Then $X$ is locally the gradient of
  a $C^1$ function (and therefore continuous) on the open set $\{ x \in \Omega
  : \phi ( x) \neq 0 \}$. Moreover, if $X \in C^k ( \Omega)$, $k \in
  \mathbbm{N}$, then $\phi \in C^{k + 1} ( \Omega)$.
\end{proposition}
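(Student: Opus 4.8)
The plan is to exploit the hypothesis $\phi\in C^1(\Omega)$ together with the pointwise identity $\nabla\phi+\phi X=0$ to transfer regularity between $\phi$ and $X$, working on the open set $U=\{x\in\Omega:\phi(x)\neq0\}$. First I would observe that on $U$ the equation can be divided by $\phi$: since $\phi$ is continuous and nonvanishing on $U$, the function $1/\phi$ is $C^1$ there, so $X=-\nabla\phi/\phi=-\nabla(\log|\phi|)$ on $U$. (If one prefers to avoid the absolute value, note that $U$ is a union of connected open sets on each of which $\phi$ has a constant sign, so $\log|\phi|$ is a well-defined $C^1$ function locally.) This immediately gives that $X$ is locally a gradient of a $C^1$ function on $U$, hence in particular continuous there — the first claim.

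For the bootstrap claim, the idea is a standard induction on $k$. Suppose $X\in C^k(\Omega)$; I want $\phi\in C^{k+1}(\Omega)$. The base case $k=0$ is the statement that if $X$ is continuous then $\phi\in C^1$, which is already part of the hypothesis (and is the kind of fact recalled before the proposition, via \cite{ho1}). For the inductive step, assume the claim holds for $k-1$, and let $X\in C^k(\Omega)\subset C^{k-1}(\Omega)$; by the inductive hypothesis $\phi\in C^{k}(\Omega)$. Now read the equation as $\nabla\phi=-\phi X$: the right-hand side is a product of a $C^k$ function and a $C^k$ vector field, hence is $C^k$; therefore all first partials of $\phi$ are $C^k$, which means $\phi\in C^{k+1}(\Omega)$, completing the induction.

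The one point that needs a little care is the behavior at the zero set of $\phi$. On $\Omega\setminus U$ there is nothing to prove for the first assertion (we only claimed $X$ is a gradient on $U$), and the bootstrap argument above never divides by $\phi$ — it only multiplies — so it is valid on all of $\Omega$, zeros of $\phi$ included. The main (very mild) obstacle is thus purely bookkeeping: making sure the $\log|\phi|$ representation is stated on the correct domain (the components of $U$, where the sign of $\phi$ is locally constant), and being explicit that the product rule $\nabla(\phi X)\in C^k$ requires both factors in $C^k$, which is exactly what the induction delivers at each stage. No hard analysis is involved; the content is entirely in organizing the two directions — "$\phi$ good $\Rightarrow$ $X$ good on $U$" and "$X$ good $\Rightarrow$ $\phi$ better everywhere" — and in invoking the elementary distributional-to-classical fact for the base case.
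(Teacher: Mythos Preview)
Your argument is correct and follows the same route as the paper: write $X=-\nabla\log|\phi|$ locally on $U$ (using continuity to fix the sign), then bootstrap regularity from $\nabla\phi=-\phi X$ via the product rule. The paper compresses the second part to the phrase ``by applying Leibniz's rule,'' whereas you spell out the induction; the only superfluous detail is the appeal to the distributional-to-classical fact for the base case, since $\phi\in C^1(\Omega)$ is already assumed.
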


\begin{proof}
  Suppose $\phi ( x_0) > 0$ for some point $x_0 \in \Omega$, then $\phi ( x) >
  0$ in $\tmop{the} \tmop{ball} B_{\epsilon} ( x_0)$ for some $\epsilon =
  \epsilon ( x_0) > 0$ by continuity. Thus, $X ( x) = - \nabla \log \phi ( x)$
  in $B_{\epsilon} ( x_0)$. If $\phi ( x_0)<0$ then apply the same argument to
  the function $\phi' ( x) \assign - \phi ( x)$, giving $X ( x) = - \nabla
  \log \phi' ( x)$. Since $\log \phi$ is $C^1$ wherever $\phi>0$ is, the
  assertion follows. The regularity claim can be proved by applying Leibniz's
  rule. 
\end{proof}

\begin{lemma}
  Let $\phi$ be a function defined in the ball $\{ | x | < R_0 \}$ satisfying
  \[ | \phi ( x) | \leqslant C | x |^{\alpha} \sup_{| y | \leqslant | x |}  |
     \phi ( y) | \]
  for some constants $C \geqslant 0$, $\alpha > 0$. Then $\phi \equiv 0$ in
  the ball $\{ | x | < \min ( R_0 \nocomma, C^{- \alpha}) \}$. \ 
\end{lemma}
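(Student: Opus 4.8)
The plan is a one-step ``sup-over-balls'' contraction. For $0 \le r < R_0$ set $M(r) := \sup_{|y| \le r} |\phi(y)|$. In the situations where this lemma is applied $\phi$ is continuous, so it suffices (and I will assume) that $\phi$ be locally bounded near the origin; then $M(r) < \infty$ and $M$ is nondecreasing, while the hypothesis evaluated at $x = 0$ already gives $\phi(0) = 0$. First I would fix $r$ with $0 \le r < R_0$ and, for an arbitrary $x$ with $|x| \le r$, estimate
\[ |\phi(x)| \;\le\; C|x|^{\alpha} \sup_{|y| \le |x|} |\phi(y)| \;\le\; C\,r^{\alpha}\,M(r), \]
using $|x|^{\alpha} \le r^{\alpha}$ and the monotonicity $M(|x|) \le M(r)$. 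Taking the supremum over all such $x$ yields the decisive inequality
\[ M(r) \;\le\; C\,r^{\alpha}\,M(r), \qquad \text{equivalently} \qquad \bigl(1 - C r^{\alpha}\bigr)\,M(r) \;\le\; 0 . \]

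Next I would observe that as soon as $C r^{\alpha} < 1$ --- i.e. $r < C^{-1/\alpha}$, with this bound read as $+\infty$ when $C = 0$ --- the factor $1 - C r^{\alpha}$ is strictly positive, which forces $M(r) \le 0$, hence $M(r) = 0$; thus $\phi \equiv 0$ on the closed ball $\{ |x| \le r \}$. Since $\alpha > 0$, these admissible radii $r$ exhaust the interval $\bigl[0, \min(R_0, C^{-1/\alpha})\bigr)$, and letting $r \uparrow \min(R_0, C^{-1/\alpha})$ gives $\phi \equiv 0$ on the open ball $\{ |x| < \min(R_0, C^{-1/\alpha}) \}$. (No iteration is really needed --- the estimate closes in a single pass --- and the natural threshold emerging from it is $C^{-1/\alpha}$.)

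The one point that genuinely needs attention is the finiteness of $M(r)$: if $M(r) = +\infty$ the inequality $M(r) \le C r^{\alpha} M(r)$ carries no information, and in fact the conclusion fails for unbounded $\phi$ --- e.g. a function that vanishes near $0$ but has a bump, or a pole, supported on an intermediate annulus satisfies the hypothesis with $\phi \not\equiv 0$. So the local boundedness (continuity) of $\phi$ must be used somewhere; once it is granted, the remainder of the argument is entirely elementary.
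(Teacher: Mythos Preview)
Your argument is essentially identical to the paper's: define $M(r)=\sup_{|y|\le r}|\phi(y)|$, use monotonicity to pass from the pointwise hypothesis to $M(r)\le C r^{\alpha}M(r)$, and conclude $M(r)=0$ whenever $Cr^{\alpha}<1$. Two of your remarks actually sharpen the paper: you correctly identify the radius as $C^{-1/\alpha}$ (the statement's $C^{-\alpha}$ is a typo, as the paper's own proof shows), and your observation that local boundedness of $\phi$ is genuinely needed---with the annulus counterexample---fills a tacit assumption the paper leaves unspoken.
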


\begin{proof}
  Let $M ( r) : = \sup_{| y | \leqslant r}  | \phi ( y) |$ and note that it is
  a non-decreasing function of $r$. Then we have by assumption $| \phi ( x) |
  \leqslant C | x |^{\alpha} M ( | x |)$, therefore, taking the supremum:
  \[ \sup_{| x | \leqslant r}  | \phi ( x) | = M ( r) \leqslant C r^{\alpha}
     \sup_{| x | \leqslant r} M ( | x |) = C r^{\alpha} M ( r) \]
  thus, $M ( r) = 0$ for $C r^{\alpha} < 1$. \ 
\end{proof}

\begin{proposition}
  Let $\phi \in C^1 ( \mathbbm{R}^n)$ be a solution of
  \[ \nabla \phi ( x) + \phi ( x) X ( x) = 0 \hspace{2em} \forall x \in
     \mathbbm{R}^n \]
  where $X$ is a vector field on $\mathbbm{R}^n$. Suppose $\phi ( x_0) = 0$
  for a $x_0 \in \mathbbm{R}^n .$ If $X$ is locally bounded then $\phi \equiv
  0$ in $\mathbbm{R}^n .$ \ 
\end{proposition}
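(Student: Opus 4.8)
The plan is to localise around a zero of $\phi$, reduce the first-order equation to an integral inequality of the type handled by the Lemma, and then spread the conclusion over all of $\mathbbm{R}^n$ by a connectedness argument.

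First I would observe that the equation keeps its form under translation. Given any point $x_1$ with $\phi ( x_1) = 0$, put $\psi ( x) = \phi ( x + x_1)$ and $Y ( x) = X ( x + x_1)$, so that $\nabla \psi + \psi Y = 0$ everywhere, $\psi \in C^1 ( \mathbbm{R}^n)$, $\psi ( 0) = 0$, and, by local boundedness of $X$, $| Y ( x) | \leqslant M$ on some ball $\{ | x | < R_0 \}$. For $| x | < R_0$ the fundamental theorem of calculus along the segment $t \mapsto tx$, $t \in [ 0, 1]$, gives
\[ \psi ( x) = \psi ( 0) + \int_0^1 \nabla \psi ( tx) \cdot x\, dt = - \int_0^1 \psi ( tx)\, \bigl( Y ( tx) \cdot x \bigr)\, dt, \]
whence, using $| tx | \leqslant | x |$ for $t \in [ 0, 1]$,
\[ | \psi ( x) | \leqslant M | x | \sup_{| y | \leqslant | x |} | \psi ( y) | . \]
This is exactly the hypothesis of the Lemma with $C = M$ and $\alpha = 1$, so $\psi \equiv 0$ on $\{ | x | < \min ( R_0, 1 / M) \}$; equivalently, $\phi$ vanishes identically on a whole neighbourhood of $x_1$.

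Next I would set $A = \{ x \in \mathbbm{R}^n : \phi ( x) = 0 \}$. It is closed because $\phi$ is continuous, and nonempty because $x_0 \in A$. The previous step shows that every point of $A$ is interior to $A$, so $A$ is open as well. Since $\mathbbm{R}^n$ is connected, $A = \mathbbm{R}^n$, that is, $\phi \equiv 0$.

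I do not expect a serious obstacle. The one point needing a little care is to arrange the integral estimate so that the supremum on the right-hand side is taken over the ball $\{ | y | \leqslant | x | \}$ rather than over an arbitrary line segment; this is achieved by integrating outward from the zero of $\phi$ and invoking $| tx | \leqslant | x |$. The local boundedness hypothesis on $X$ is precisely what furnishes the constant $M$ on a fixed ball about each zero, and the hypothesis $\phi \in C^1$ is what justifies the fundamental theorem of calculus step. Everything else is routine (translation invariance of the equation, continuity, connectedness of $\mathbbm{R}^n$).
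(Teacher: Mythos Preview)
Your proof is correct and follows essentially the same route as the paper: translate the zero to the origin, integrate $\nabla\phi$ along the segment $t\mapsto tx$ using the equation, bound the result by $|x|\sup_{|y|\leqslant |x|}|\phi(y)X(y)|$, and invoke the Lemma. The paper's proof is terser---it stops at the integral inequality and leaves both the explicit use of local boundedness of $X$ (to pull the constant $M$ out of the supremum and match the Lemma's hypothesis) and the open--closed connectedness argument on the zero set to the reader; you have supplied exactly those missing steps.
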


\begin{proof}
  Without loss of generality let $x_0 = 0$. Since $\phi \in C^1 (
  \mathbbm{R}^n)$ it follows that $\frac{d}{d t} \phi ( t x) = \langle \nabla
  \phi ( t x), x \rangle = -\langle \phi ( t x) X ( t x), x \rangle \Rightarrow
  \phi ( x) - \phi ( 0) = -\int_0^1 \langle \phi ( t x) X ( t x), x
  \rangle d t$. Thus
  \[ | \phi ( x) | = \left| \int_0^1 \langle \phi ( t x) X ( t x), x \rangle d
     t \right| \leqslant | x | \sup_{| y | \leqslant | x |}  | \phi ( y) X (
     y) | . \]
  
\end{proof}

\begin{remark}
  Local boundedness is necessary as the standard $C_0^{\infty}$ function
  $\chi^{}_{B_1} e^{\frac{1}{| x |^2 - 1}}$ shows.
\end{remark}

%%%%%%%%%%%%%%%%%%%%%%
\section{Concluding remarks}

We have seen that the Schr{\"o}dinger ground states and those of the function
$\mathcal{J}_X$ are essentially the same. Moreover, ground states cannot
change sign and the vector field $X$ has (therefore) to be a gradient. Thus,
they have to satisfy the simple first order equation
\[ \nabla \phi ( x) + \phi ( x) \nabla u ( x) = 0 \]
with solution
\[ \phi ( x) = C e^{- u ( x)} . \]
The qualitative properties of $\phi$ are determined in an essential way by the
function $u$. For instance, the critical points of $\phi$ are those of $u$,
and $\phi$ is log-concave if $u$ is convex. In a certain way the same is true
for the ``inhomogeneous'' equation
\[ \nabla \phi ( x) + \phi ( x) X ( x) = Y ( x), \]
which we have not touched here.

Now, what is the physical meaning of $X ?$ When we multiply $\left(
\ref{maineq} \right)$ by $- i$, we get
\begin{equation}
  p \phi = - i \nabla \phi = i X \phi, \label{momentum}
\end{equation}
where $p$ is the momentum operator. So, indeed, $X$ indicates the momentum of
the ground state. In the same sense it holds for the angular momentum,
\[ L \phi = ( x \wedge p) \phi = i ( x \wedge X) \phi . \]
For any radial function $u ( x) = u ( | x |)$, for example:
\[ X = \nabla u = u' ( | x |) \frac{x}{| x |} \]
so that the angular momentum of the corrseponding ground state has to be zero.
In other words, one can prescribe the momentum 'field' $X = \nabla
u$ (necessarily a potential field), then the (Schr{\"o}dinger) ground state is
completely determined. Clearly, the function $u ( x)$ is closely related to
the classical action function $S ( x)$ and the parallels to Hamilton-Jacobi
theory are quite obvious, but we have found that the deeper reason for $\left(
\ref{momentum} \right)$ stems from the quantum mechanical phase-space measure
\[ d \mu_{\phi} = | \phi ( x) |^2  | \hat{\phi} ( k) |^2 d x d k, \]
which is a Radon measure generated by any normalized $L^2 ( \mathbbm{R}^n)$
function $\phi$, so that one can write for the quantum mechanical energy:
\[ \mathcal{E} ( \phi) = \int_{\Gamma} \mathcal{H} ( x, \hbar k) d \mu_{\phi},
\]
where $\mathcal{H}$ is the classical Hamilton function. We cannot go into
details, but minimizing $\mathcal{E}$ with respect to $\phi$, gives not the
usual Schr{\"o}dinger equation, however, a kind of ``double one'' on $L^2 (
\Gamma)$, where $\Gamma =\mathbbm{R}^n \times \mathbbm{R}^n$ is the phase
space. When we write $\left( \ref{peq2} \right)$ in the form
\[ \int_{\mathbbm{R}^n} | k |^2 | \hat{\phi} ( k) |^2 d k \int_{\mathbbm{R}^n}
   | X |^2 | \phi ( x) |^2 d x \geqslant \frac{1}{4} \left(
   \int_{\mathbbm{R}^n} \tmop{div} X | \phi ( x) |^2 d x \right)^2  \]
and rewrite it to
\[ \int_{\Gamma} | k |^2  | X |^2 d \mu_{\phi} 
   \geq \vert\langle \widehat{X \phi}, k\hat{\phi} \rangle\vert^2 \]
the symmetry between $k$ and $X$ may be apparent. It might give a clue why it 
is the
Fourier transform that connects configuration and momentum space in quantum
mechanics (the way the Heisenberg group acts on  
$\Gamma\times\mathbbm{R}$ and the behaviour of $d\mu_{\phi}$ under canonical
tranformations also support this).

Another way to gain some physical insight is to look at the associated 
energy-momentum tensors to the equations \ $\left(
\ref{schreq} \right)$ and $\left( \ref{eulereq} \right)$ :
\[ T_S = \nabla \phi \otimes \nabla \phi -\mathbbm{I} ( | \nabla \phi |^2 + (
   V ( x) - E) | \phi |^2) \]
and
\[ T_X = \nabla \phi \otimes \nabla \phi -\mathbbm{I} ( | \nabla \phi |^2 + (
   | X |^2 - \tmop{div} ( X)) | \phi |^2) . \]
Now,
\[ \tmop{Div} ( T_S) = - \nabla V ( x) | \phi |^2 \]
and (think of a force $F(x)= - \nabla V ( x)$)
\[ \tmop{Tr} ( T_S) = ( 2 - n) | \nabla \phi |^2 + n ( V ( x) - E) | \phi |^2
   . \]
When we insert $\nabla \phi = - \phi X$ into $T_X$, it follows
\[ T_X = | \phi |^2  ( X \otimes X -\mathbbm{I} ( 2 | X |^2 - \tmop{div} X)),
\]
and therefore
\[ \tmop{Tr} ( T_X) = | \phi |^2  ( ( 2 - n) | X |^2 + n \tmop{div} X) . \]
The meaning of $X$ and $\tmop{div} X$ is now quite obvious as $T_X $ and $T_S$
are factually the same. For instance, if the potential $V$ is homogeneous of
degree $k$, then (formally)
\[ \tmop{div} ( T_S x) = \tmop{Div} ( T_S) \cdot x + \tmop{Tr} ( T_S) = - x
   \cdot \nabla V ( x) | \phi |^2 + ( 2 - n) | \nabla \phi |^2 + n ( V ( x) -
   E) | \phi |^2 \]
\[ = ( 2 - n) | \nabla \phi |^2 + ( n - k) V ( x) | \phi |^2 - n E | \phi |^2,
\]
which is nothing more than the virial theorem, so that we must have
\[ \tmop{div} ( T_X x) = 
    \{ ( 2 - k) | X |^2 + ( k - n) \tmop{div} X - k E \} | \phi |^2 . \]
If, as we have seen in the examples, the ground states fall off sufficiently
rapid, then
\[ \int_{\mathbbm{R}^n} \tmop{div} ( T_S x) d x = 0, \]
hence

	\[
	     2 \int_{\mathbbm{R}^n} | \nabla \phi |^2 dx = 
			 k \int_{\mathbbm{R}^n} V(x) | \phi |^2 dx.
\]

%%%%%%%%%%%%%%%%%%%%%%

\end{document}